\DeclareMathOperator{\trace}{Tr}
\DeclareMathOperator{\deter}{Det}
\newtheorem{thm}{Theorem}
\newtheorem{remark}{Remark}
\newtheorem{lemma}{Lemma}
\def\subsection{\@startsection{subsection}{2}%
	\z@{.5\linespacing\@plus.7\linespacing}{.1\linespacing}%
	{\normalfont\bfseries}}
\begin{document}
\begin{CJK*}{GB}{} 

\title[Language Dynamics]{The ``higher" status  language doesn't always win: The fall of English in India and the rise of Hindi}

\author{ Kushani De Silva}
\affiliation{Department of Mathematics, Iowa State University, Ames, IA 50011.}
\author{Aladeen Basheer}
\affiliation{Department of Mathematics, University of Georgia, Athens, GA 30602. }
\author{Kwadwo Antwi-Fordjour}
\affiliation{Department of Mathematics and Computer Science, Samford University, Birmingham, AL 35229.}
\author{Matthew A. Beauregard}
\affiliation{Department of Mathematics, Stephen F. Austin State University, Nacogdoches, TX 75962.}
\author{Vineeta Chand }
\noaffiliation
\author{Rana D. Parshad}
\affiliation{Department of Mathematics, Iowa State University, Ames, IA 50011.}

	\maketitle
%
%
%
%
%
\end{CJK*}


\begin{abstract}
Classical language dynamics explains language shift as a process in which speakers adopt a higher status language in lieu of a lower status language. This is well documented with English having out-competed languages such as Scottish Gaelic, Welsh and Mandarin.
The 1961-1991 Indian censuses report a sharp increase in Hindi/English Bilinguals, suggesting that English is on the rise in India - and is out-competing Hindi. However, the 1991 - 2011 data shows that Bilingual numbers have saturated, while Monolingual Hindi speakers continue to rise exponentially.
To capture this counter-intuitive dynamic, we propose a novel language dynamics model of interaction between Monolingual Hindi speakers and Hindi/English Bilinguals, which captures the Indian census data of the last 50 years with near perfect accuracy, outperforming the best known language dynamics models from the literature.
We thus provide a first example of a lower status language having out competed a higher status language.\\
Keywords: Language dynamics, Language shift, cultural transmission, Bilingualism, Gaelic, Welsh, Hindi
\end{abstract}

\maketitle
\section{Introduction}

\subsection{Motivation}

Languages compete, just as species do, for speakers in a population \cite{Nie13, Pat09,bg12,pr06,Kand09, An12}. Methods from statistical physics, evolutionary biology, dynamical systems, game theory and agent-based modeling have been extremely effective in analysing language change and shift, see \cite{Nie13, L07,CFL09, L11, N99, muf08,Monojit,Prochazka,blythe} and the references within. Language shift towards English has also been under intense investigation \cite{Nie13, Kand09, W14, Stro03, Min08, Now02, Wang05, Nie15, IF14, Mir05, Mir11, Bag90, Hein14}. These models have successfully shown how English has out-competed Scottish Gaelic in Scotland, Welsh in Wales and Mandarin in Singapore \cite{Stro03,Gon13}.
The essential approach in the literature is to formulate the language dynamics problem as a two species competition problem, where the more prestigious language is the stronger competitor, and the less prestigious one the weaker competitor. Consider a population in which the speakers have a choice of either language A (the more prestigious one) or B (the less prestigious one).  Denote $u(t)$ the fraction of the population that speaks A, and
$v(t)$ the fraction of the population that speaks B, so $u+v=1$. In this setting we can write down a differential equation for the change in the fraction of the populations $u$,

\begin{equation}
\label{Eqn:lq}
  \frac{du}{dt} =  \overbrace{\alpha_{1}f(u)v}^{\mbox{ speakers switching to A}} -  \overbrace{\alpha_{2}h(u)v}^{\mbox{ speakers switching  to B}} .
\end{equation}

The functions $f$ and $h$ are typically of Lokta-Volterra type.
These models predict steady states of $(1,0)$ or $(0,1)$, depending on which language is more prestigious - but essentially the stronger language (competitor) wipes out the other.
Alternatively, three species models that incorporate bilingualism have also been considered \cite{Mir05, Min08}, and it is shown that under certain constraints on inter-linguistic similarity actual three language groups (2 Monolingual groups and a Bilingual group) can all co-exist \cite{Mir11}. However, models for language competition are not one size fits all. In the Indian context, exploring language competition between Hindi and English using earlier models with an assumption of one high prestige and the latter low prestige, the two languages recognized in the constitution for use across India, does not yield an ideal fit of census data from 1961-2011, as we will show. Thus a nuanced exploration of modeling language competition in evolving social contexts that take into account local ecological factors may be an alternative approach.

%
%

\subsection{Trends in the Indian Data}


In order to motivate our analysis we look at the Indian census data divided into two periods, 1961-1991 and 1991-2011.
During 1961-1991 the Bilingual English-Hindi population grew faster than the Monolingual Hindi population ($\approx e^{0.062 t}$ vs $\approx e^{0.025 t}$), see Fig. \ref{fig:ExpFitsub1}. However, during 1991-2011 the Bilingual English-Hindi population completely saturated - whereas the Monolingual Hindi population continues to grow at $\approx e^{0.0249 t}$, see Table \ref{table: exp fits} and Fig. \ref{fig:ExpFitsub2}. Interestingly, if one focuses only on the 1961-1991 data, classical language dynamics models \cite{IF14}, provide the best fit, see Table \ref{table: smt}.

\begin{table}[H]
		\caption{Estimated parameter values of the two language groups in India at two periods, 1961-1991 and 1991-2011. The format of the fitted functions are given in parenthesis.}
	\begin{tabular}{llll}
		\hline
		Period                     & Group                & Parameter Estimates                       & SSE                   \\\hline
		\multirow{2}{*}{1961-1991} & Monolingual ($a \exp (bt)$)         & $a=2.898 \times 10^{-23}, b=0.025$        & $3.025\times 10^{-4}$ \\
		& Bilingual  ($a \exp (bt)$)          & $a=8.855 \times 10^{-56}, b=0.062$        & $3.154\times 10^{-6}$ \\\hline
		\multirow{3}{*}{1991-2011} & Monolingual ($a \exp (bt)$)         & $a=8.234 \times 10^{-23}, b=0.0249$       & $7.738\times 10^{-6}$ \\
		& Bilingual ($p_1t + p_2$)  & $p_1 = 1.509 \times 10^{-4},p_2 = -0.271$ & $2.312\times 10^{-6}$ \\
		& Bilingual ($p_2$) & $p_2 = 0.031$                               & $6.916\times 10^{-6}$ \\ \hline
	\end{tabular}
\label{table: exp fits}
\end{table}

\begin{figure}[H]
	\centering
	\subfigure[caption]{	\includegraphics[width=.5\linewidth]{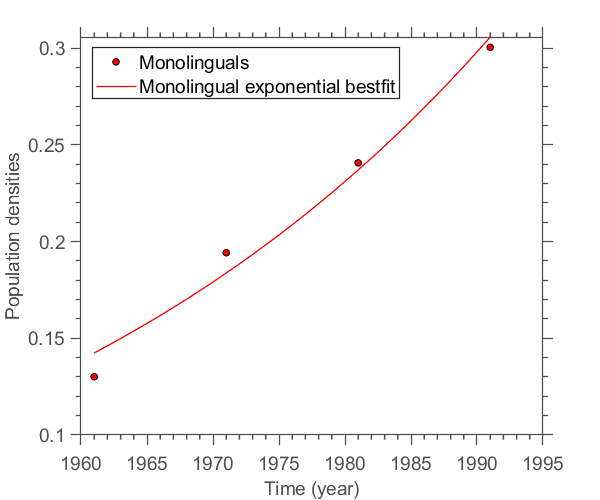}}\label{fig:ExpFit_Mono}%
\subfigure{\includegraphics[width=.5\linewidth]{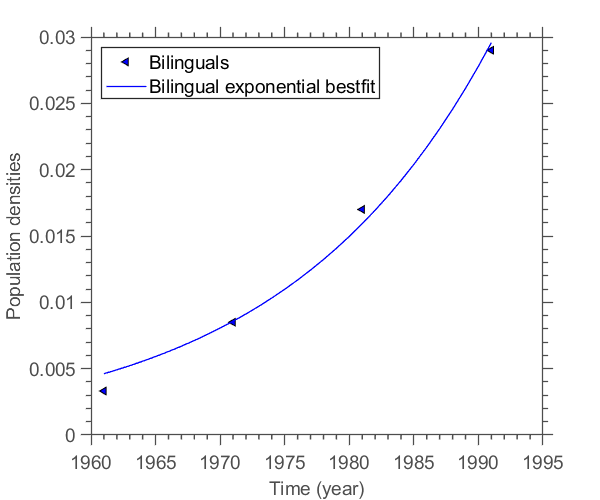}}\label{fig:ExpFit_Bilin}
	\caption{The exponential fit for Monolingual (left) and Bilingual (right) census data from 1961-1991.}
	\label{fig:ExpFitsub1}
\end{figure}

\begin{figure}[H]
	\centering
	\subfigure{\includegraphics[width=.5\linewidth]{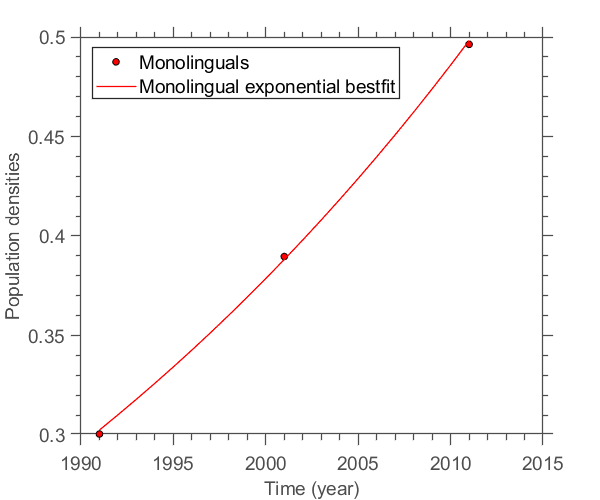}}\label{fig:expfit_mono_sub2}%
\subfigure{\includegraphics[width=.5\linewidth]{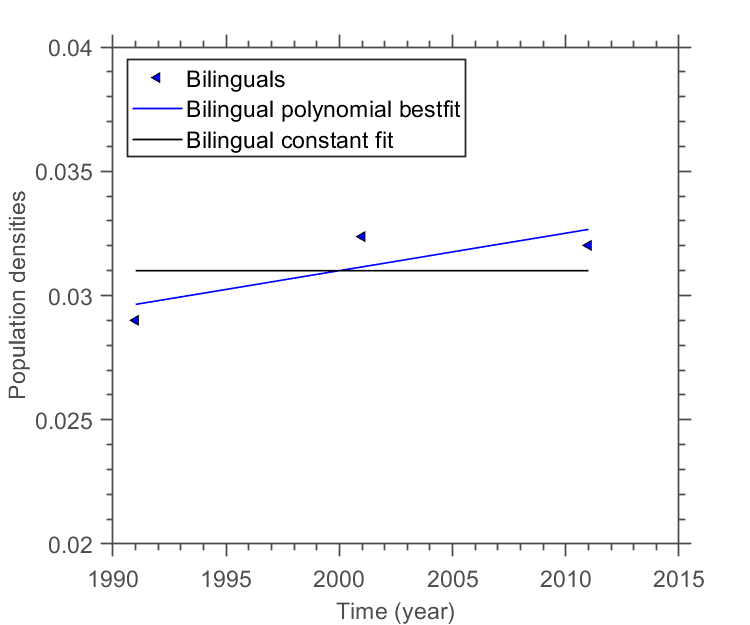}}\label{fig:expfit_bilin_sub2}
	\caption{The exponential fit for Monolingual (left) and linear fit for Bilingual (right) census data from 1991-2011.}
	\label{fig:ExpFitsub2}
\end{figure}

\subsection{Local Ecological Factors}

We focus on a narrower setting within India, the Hindi Belt, which is a swath across north/central India encompassing the capital, New Delhi, and including a majority of the Indian population, in which Hindi has a stronghold \cite{Jaffrelot2000, LaDousa2004}. More broadly, a large proportion of L1 Hindi speakers all over India, are from these states.
Note, Indian politics are inextricably tied to language.
Hindi as a signified product is linked in today's India to Hindutva - a nationalist sentiment \cite{Ninan2017} which imagines and promotes a monolithic homogeneous and hegemonic Hindu identity for all Indians. This right-wing nationalist ideology erases indigenous ethnic, religious and linguistic diversity, and challenges the secular, multicultural ideals upon which India was established.
While English media dominated the national marketplace from India's 1947 independence through the early 1980's, this is no longer true, and Hindi (along with other vernacular languages) now dominates Indian media: compare the Hindi market share of 47.7$\%$ to English, which accounts for only 11.4$\%$ of print daily newspapers \cite{Neyazi2018}. Specifically, there has been a rapid increase in Hindi-medium print and TV media since the early 1980's, especially in rural areas \cite{Neyazi2018}. 
Also, the path to English bilingualism, when coming from a monolingual Hindi background, is challenging in a number of ways. While international schools, one benchmark of English medium education in India, have increased, these schools exclude lower classes because of tuition costs, and exclude rural communities, given that they are situated in urban areas. Thus resources to learn English are getting limited. \footnote{Contemporary language hybridization is also visible in unmarked code-switching (alternating between two or more languages in a single conversation) between English and regional Indian vernaculars.
	In the Hindi Belt, Hinglish is the most prominent form of hybrid communication.  ``Hinglish" is a colloquial umbrella-term \cite{Kot11, Chau13, Si10} spanning isolated borrowings to rich code-switching practices unintelligible to Monolingual Hindi or English speakers.}
However, for census purposes Hinglish speakers, would list themselves as Bilinguals. Thus from a modeling standpoint (and in light of the 2011 census data), we divert from our three species framework proposed in \cite{PC16} and include Hinglish speakers in the Bilingual class.

\section{Model System}
We define,

\begin{itemize}
\item Monolingual Hindi class (M): Can produce Monolingual Hindi, English restricted to limited inclusion of established indigenizations and loanwords.

\item Hindi/English Bilingual class (B): Can produce Monolingual Hindi and Monolingual English.
This class also contains an urban sub-population that cannot produce pure Monolingual Hindi, and/or Monolingual English, only Hinglish - but have a certain degree of competency in both.
\end{itemize}
 We next describe our compartmental model, describing the interaction between $M(t)$ and $B(t)$, the populations of the Monolingual Hindi and Bilingual English/Hindi speaking communities,
\begin{eqnarray}
\label{ODE:M}
\dot{M} &=& \frac{a_1M}{1+d_1 B}  - \frac{a_{MB}MB}{1+d_2M} - b_1 M \equiv f_M(M,B)M\\
\dot{B} &=& \frac{a_{MB}MB}{1+d_2M} - b_2 B^2   \equiv f_B(M,B)B.\notag
\end{eqnarray}
The parameters are all assumed to be positive and their descriptions are given in Table \ref{tab:sym meaning}.
\begin{table}[H]
	\begin{center}
		\caption{List of parameters of the ODE system in Eq.~\eqref{ODE:M} and their contextual meanings.}
		\label{tab:sym meaning}
		\begin{tabular}{@{}l l@{}}
			\hline
			Symbol & Meaning \\
			\hline
			$a_1$                  & Growth rate of $M$ \\
			$b_1$                  & Natural mortality of $M$ \\
			$b_2$                  & inter-species competition in $B$   \\
			$a_{MB}$               & rate at which $M$ are recruited into $B$ \\
			$\frac{1}{d_1}$        & Measures the effect of local ecological factors in promoting the growth\\ &  rate of $M$\\
			$d_2$                  & Measures the resilience of $M$ in recruitment to $B$ \\
			\hline
		\end{tabular}
	\end{center}
\end{table}
The term $\boxed{\frac{a_1M}{1+d_1 B}}$ represents the growth rate of the Monolingual population, which could be hindered by the Bilingual population. Local ecological factors will influence $d_{1}$. If local ecological factors promote Monolingual Hindi, then $d_{1} \ll 1$, and the growth of $B$ is unable to curb the growth of $M$.
The growth of the Bilingual population depends on the successful recruitment from the Monolingual population, described via $\pm \frac{a_{MB}MB}{1+d_2M}$. Notice, that if $d_2 \gg 1$ then the recruitment by $B$ is small. Hence, the growth of $B$ is reduced. For large Monolingual populations there is a maximum recruitment $a_{MB}/d_2$ by the Bilingual population. The Bilingual population experiences a loss due to inter-species competition, this may be a consequence of limitation of resources, expressed via  $\boxed{- b_2 B^2}$.

\section{Data Fitting Results}
\subsection{Fitting Via Our New Model}
We now fit the Indian census data to the solutions of the system of ODE given in Eq. ~\eqref{ODE:M}. The populations of the Monolingual Hindi speakers and English-Hindi Bilinguals from 1961-2011 are shown in Table \ref{table:census data}.

\begin{table}[b]
	\caption{Census data of English-Hindi Bilinguals and Monolingual Hindi populations of India during 1961-2011 \cite{PC16,census20111,census20112}.}\label{tab:census data}
	\centering
	\begin{tabular}{@{\hspace{5mm}}l @{\hspace{15mm}}r @{\hspace{15mm}}r}\hline
		Year &Bilingual population & Monolingual population\\ \hline
		1961   &\num{3314534}              & \num{130120826}     \\
		1971   &\num{8500000 }             & \num{194267971  }   \\
		1981   &\num{17000000  }           & \num{240749009  }    \\
		1991   &\num{29000000  }           & \num{300505193 }     \\
		2001   &\num{32371131  }           & \num{389677511}      \\
		2011   &\num{32017840}             & \num{496329353}  \\  \hline
	\end{tabular}
\label{table:census data}
\end{table}

The goodness of fit in the fitted solutions is measured using the Squared Sum of Errors (SSE) \cite{merriman1909text}. The SSE value represents the error between the original census data and the fitted values, for both the Monolingual and Bilingual data. The smaller the SSE value, the better the fit is. The best fit parameters are shown in Table \ref{para estimates-Mono_noCompetition}. For the best fit curves plotted against the census data and the SSE value, see Fig. \ref{fig:Mono_noCompetition_bestFit}.


\begin{figure}[H]
	\centering
	\includegraphics[width=.6\linewidth]{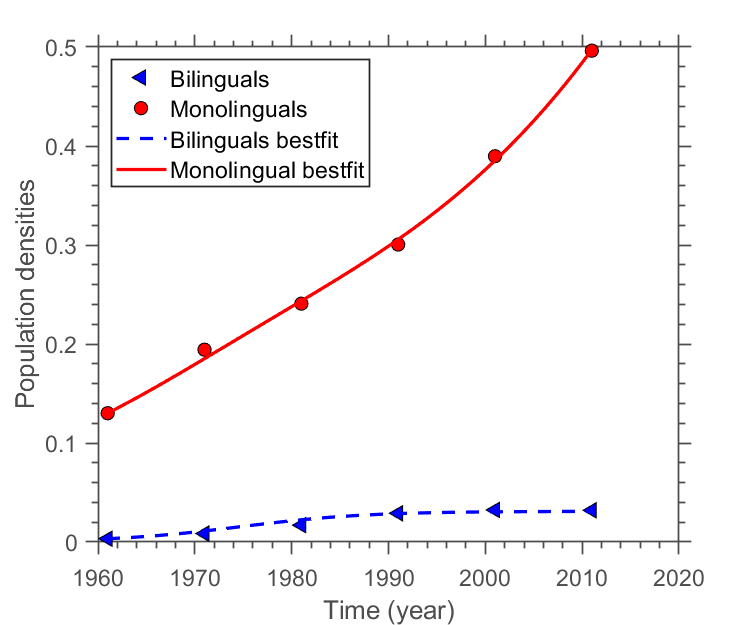}
	\caption{The trends of the Monolingual and Bilingual population densities from 1961-2011. The symbols represent the census numbers in Table \ref{para estimates-Mono_noCompetition}. Lines represent the best fits obtained from the model equation \eqref{ODE:M}. The values on the y-axis are scaled by $1\times 10^{-9}$. The resulting SSE value of the fits was $1.77\times 10^{-4}$.}
	\label{fig:Mono_noCompetition_bestFit}
\end{figure}
\begin{table}[H]
	\caption{The best estimated parameters of the model equation \eqref{ODE:M}. The scaled estimates (shown with *) are correspond to the best fits shown in Fig. \ref{fig:Mono_noCompetition_bestFit} whereas the un-scaled estimates correspond to the original census data in Table \ref{tab:census data}. }
	\centering
	\label{para estimates-Mono_noCompetition}
	\begin{tabular}{@{\hspace{5mm}}l @{\hspace{15mm}}r @{\hspace{15mm}}l}
		\hline
		Parameter& Estimate (scaled) & Estimate (un-scaled)\\
		\hline
		$a_1^*$    & \num{1.295} & $a_1^*$  \\
		$a_{MB}^*$ & \num{1013.749} &$a_{MB}^*/10^{9}$ \\
		$d_1^*$    & \num{0.171} & $d_1^*/10^{9}$ \\
		$d_2^*$    & \num{6565.040} &$d_2^*/10^{9}$ \\
		$b_1^*$    & \num{1.252} & $b_1^*$\\
		$b_2^*$    & \num{4.976} &$b_2^*/10^{9}$\\
		\hline
	\end{tabular}
\end{table}


The long-term simulations of the model solutions with the best-estimated parameters in Table \ref{para estimates-Mono_noCompetition} are shown in Fig. \ref{fig:Mono_noCompetition_bestFit_LR}.


\begin{figure}[H]
	\centering
	\subfigure{\includegraphics[width=.5\linewidth]{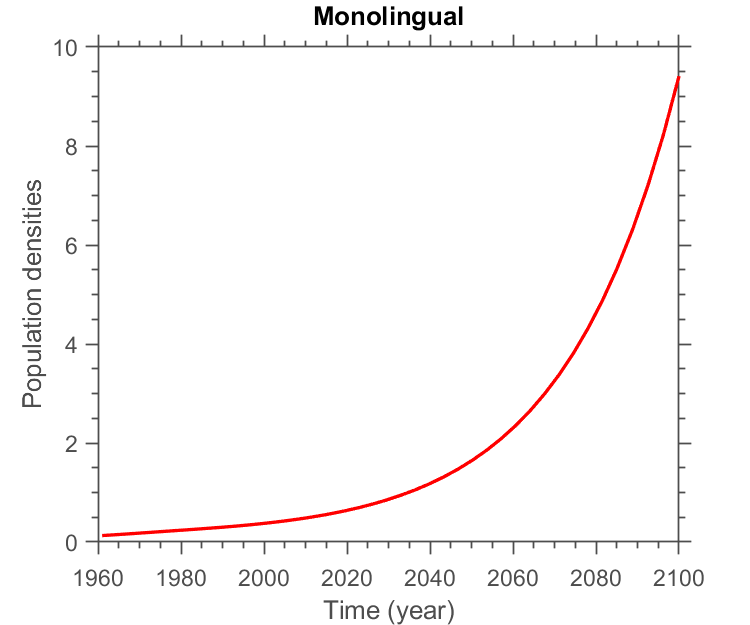}}\label{fig:Mono_noCompetition_bestFit_LR_mono}%
\subfigure{\includegraphics[width=.5\linewidth]{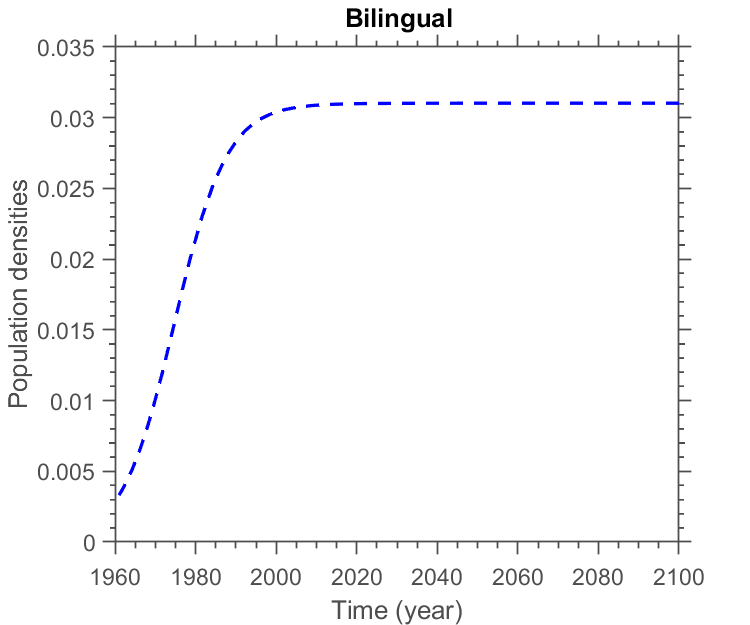}}\label{fig:Mono_noCompetition_bestFit_LR_bi}
		\caption{The long-time simulations until 2100 of the best fits in Fig. \ref{fig:Mono_noCompetition_bestFit} for Monolinguals (left panel) and Bilinguals (right panel).}
	\label{fig:Mono_noCompetition_bestFit_LR}
\end{figure}
The results of our model are compared against other ODE models in the literature and they are given in the next section.
\subsection{Fitting via the Language Dynamics Models in the Literature}
In this section, we compare our results with several other language dynamics models in the literature - in particular the Parshad \& Chand model \cite{PC16}, the Isern $\&$ Fort model \cite{IF14}, Kandler's language shift model \cite{AK10} (modified according to the context studied in this paper), the model of Mira $\&$ Paredes \cite{M05} (which reduced to the Abrams $\&$ Strogatz model after modifying according to the context of this paper). In the Abrams $\&$ Strogatz, the model has described for population fractions instead of densities. Population fractions were computed such that $M=n_M/(n_M+n_B)$ and in similar fashion for $B$ where $n_i$ represents the population density of group i. The best-fit parameters of all these model comparisons with their SSE values are given in Table \ref{all comparison models}. These models were fitted for the Indian census data and fits are shown in Fig. \ref{fig:LitModels_ALL_FITS}. We use these same models in two periods 1961-1991 and 1991-2011 to show that the data are well explained by the literature models from 1961-1991 when Bilingual population increase exponentially, see Table \ref{table: smt}.
\begin{table}[H]
	\centering
	\caption{The parameter estimates of other language dynamics models in the literature. The value for $K$, the total population is used as $0.4\times 10^{9}$. The parameters were obtained for the data scaled by a factor of $1\times 10^{-9}$. The * is used to highlight that the data used were population fractions.}
\begin{tabular}{l|l|l}
	Model                       & Parameter Estimates                                                                             & SSE                   \\ \hline
	Parshad \& Chand (2016)    & $\epsilon=2.759,d_2=37.448$                                                                     & $5.87\times 10^{-2}$  \\
	Isern \& Fort (2014)        & \begin{tabular}[c]{@{}l@{}}$\gamma=0.479,\alpha=17.404$\\ $\beta = 0.828, a=0.050$\end{tabular} & $1.43 \times 10^{-2}$ \\
	Kandler's Language Shift Model (2010) & \begin{tabular}[c]{@{}l@{}}$a_1 = 0.677, a_2=0.722$\\ $c_{12}=0$\end{tabular}                   & $6.50\times 10^{-2}$ \\
		Abrams \& Strogatz Model (2003)* & \begin{tabular}[c]{@{}l@{}}$c =13.154 , S_M=0.179$\\ $a=2.16$\end{tabular}                   & $4.68\times 10^{-2}$
\end{tabular}
\label{all comparison models}
\end{table}

\begin{figure}[H]
	\centering
	\includegraphics[width=.5\linewidth]{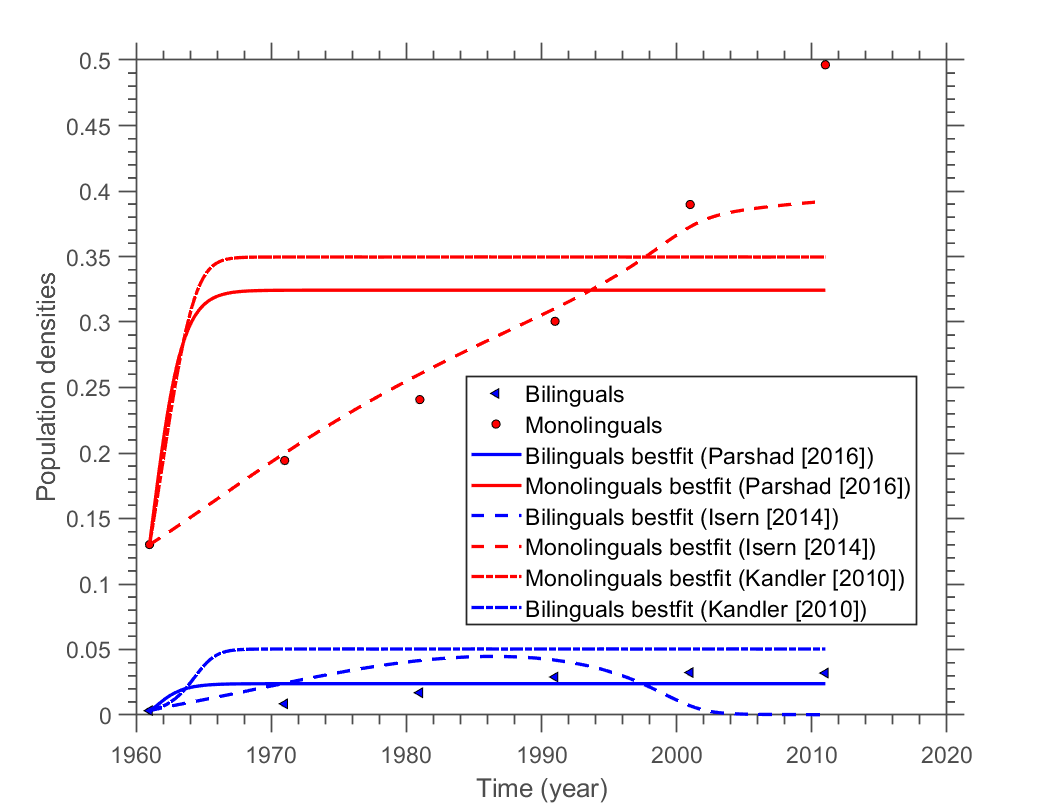}
	\caption{The trends of the Monolingual and Bilingual population densities over time compared for the literature models. The symbols represent the census data from 1961-2011. The solid lines, dash lines and dash-dot lines represent the best fits obtained from Parshad \& Chand model (Eq. (2) in \cite{PC16}), Isern \& Fort model in \cite{IF14} and Kandler's language shift model in \cite{AK10} respectively. }
	\label{fig:LitModels_ALL_FITS}
\end{figure}

\begin{table}[]
		\centering
	\caption{The SSE value comparison across models from 1961-1991 and 1991-2011.}
	\begin{tabular}{l|l|l|l|l}
		\hline
		& \multicolumn{4}{c}{SSE}                                                                                                                                                                                    \\ \cline{2-5}
		Year range  & Our Model (2019) & \begin{tabular}[c]{@{}l@{}}Parshad \& Chand \\ Model (2016)\end{tabular} & \begin{tabular}[c]{@{}l@{}}Isern \& \\ Fort (2014)\end{tabular} & \begin{tabular}[c]{@{}l@{}}Kandler \\ (2010)\end{tabular} \\ \hline
		1961 - 1991 & 5.06e-4    & 5.86e-3                                                          & 3.4e-4                                                          & 3.3e-2                                                    \\
		1991 - 2011 & 7.12e-8    & 9.4e-3                                                           & 1.05e-2                                                         & 1.6e-2
	\end{tabular}
	\label{table: smt}
\end{table}
\section{Dynamical Analysis}

\subsection{Equilibrium Solutions}

The solutions to Eq. ~\eqref{ODE:M} are nonnegative and remained bounded for all time.  They can grow at most exponentially. This is encapsulated in the following lemma.

\begin{lemma}
	\label{bound}
	For non negative initial conditions, the solutions of the system in Eq.~\eqref{ODE:M} are always positive and bounded, for any finite time $T^{*} < \infty$.
\end{lemma}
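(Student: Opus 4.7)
The plan is to handle positivity and boundedness separately, using the factored structure of the right-hand side of Eq.~\eqref{ODE:M} for positivity and a pair of differential inequalities together with Gronwall for the finite-time bound.

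First I would note that the system is written in the form $\dot{M} = f_M(M,B)\,M$ and $\dot{B} = f_B(M,B)\,B$, where $f_M$ and $f_B$ are locally Lipschitz on $\{(M,B):M\ge 0, B\ge 0\}$ (the denominators $1+d_1 B$ and $1+d_2 M$ are bounded below by $1$ on this set). By the Picard--Lindel\"of theorem there exists a unique local solution. Writing the solution as
\begin{equation*}
M(t)=M(0)\exp\!\left(\int_{0}^{t} f_M(M(s),B(s))\,ds\right),\qquad
B(t)=B(0)\exp\!\left(\int_{0}^{t} f_B(M(s),B(s))\,ds\right),
\end{equation*}
one sees immediately that if $M(0)\ge 0$ and $B(0)\ge 0$ then $M(t)\ge 0$ and $B(t)\ge 0$ for as long as the solution exists. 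This disposes of positivity.

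Next I would obtain the finite-time bound. Discarding the non-positive contributions on the right-hand side of the $M$-equation gives
\begin{equation*}
\dot{M} \;\le\; \frac{a_1 M}{1+d_1 B} \;\le\; a_1 M,
\end{equation*}
so Gronwall yields $M(t)\le M(0)\,e^{a_1 t}$, which is finite on $[0,T^{*}]$. For $B$, I would use the elementary inequality $\tfrac{M}{1+d_2 M}\le M$ (valid for $M\ge 0$) together with the bound just established and drop the $-b_2 B^2$ term to get
\begin{equation*}
\dot{B} \;\le\; a_{MB}\,M\,B \;\le\; a_{MB}\,M(0)\,e^{a_1 t}\,B .
\end{equation*}
A second application of Gronwall then gives
\begin{equation*}
B(t) \;\le\; B(0)\,\exp\!\left(\frac{a_{MB}\,M(0)}{a_1}\bigl(e^{a_1 t}-1\bigr)\right),
\end{equation*}
which is finite on $[0,T^{*}]$. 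Since both components are bounded on any finite interval, the local solution cannot blow up in finite time and extends to all of $[0,T^{*}]$.

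The only genuinely delicate point is the order of the estimates: one must bound $M$ first (its equation does not involve $B$ favourably enough on its own) and then feed that bound into the $B$-equation; the alternating Hindi/Bilingual coupling precludes a single symmetric Gronwall step. Once this ordering is fixed, everything else is a routine differential-inequality argument.
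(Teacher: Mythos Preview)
Your argument is correct; both positivity (via the exponential-integral representation) and the finite-time bounds (via Gronwall) go through as written.  The positivity step and the $M$-bound are essentially what the paper does as well.

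The one substantive difference is in the treatment of $B$.  The paper observes that $\dfrac{M}{1+d_2 M}\le \dfrac{1}{d_2}$ uniformly in $M\ge 0$, so that
\[
\dot{B}\;\le\;\frac{a_{MB}}{d_2}\,B - b_2 B^2,
\]
and hence $B$ is bounded by comparison with the logistic equation, globally in time and independently of any estimate on $M$.  You instead use the weaker inequality $\dfrac{M}{1+d_2 M}\le M$, which forces you to bound $M$ first and then feed that bound into a second Gronwall step for $B$, yielding only a finite-time estimate with a doubly-exponential constant.  Both arguments are valid for the lemma as stated, but the logistic comparison is shorter, gives a sharper (time-uniform) bound on $B$, and makes your closing remark about the ``delicate ordering'' unnecessary: in the paper's route the two estimates are completely decoupled.
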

\begin{proof}
Suppose there exists a $t=\alpha$ such that $M(\alpha)<0$ or $B(\alpha)<0$.  Without loss of generality, suppose $M(\alpha)<0$ then there exists a $t^*$ such that $0\leq t^*< \alpha$ and $M(t^*)=0$.  At $t=t^*$, $\dot{M}(t^*) = 0$ hence $M(t)=0$ for $t\geq t^*$ which contradicts our assumption that $M(\alpha)<0$.  Hence, the solutions remain positive.
	
For boundedness, one sees that the state variable $B$ is bounded via comparison to the logistic equation. Next using positivity of solutions and parameters, from Eq.~\eqref{ODE:M} we see that $\frac{dM}{dt} \leq (a_{1}-b_{1})M$, and thus
$M \leq M_{0}e^{(a_{1}-b_{1})t}$, and can grow at most exponentially.
\end{proof}

Consider the solutions to the steady state equations
\begin{eqnarray}
	\label{Steady:N}
	f_{M}(M,B)M &=&\left( \frac{a_{1}}{1+d_{1}B}-\frac{a_{MB}B}{1+d_{2}M}-b_{1}\right) M=0  \\
	\label{Steady:R}
	f_{B}(M,B)B &=&\left( \frac{a_{MB}M}{1+d_{2}M}-b_{2}B\right) B=0.
\end{eqnarray}
\noindent If $B=0$ then the mathematical model becomes a harvesting equation with equilibrium solutions $(0,0)$ and $(a_{1}/b_{1},0)$. Next, consider the case where both populations are nonzero.  Clearly,
\begin{equation*}
\frac{a_{MB}M}{1+d_{2}M}-b_{2}B=0.
\end{equation*}
\noindent Solving for $B$ yields,
\begin{equation*}
B^{\ast }=\frac{a_{MB}M}{b_{2}(1+d_{2}M)}.
\end{equation*}%
\noindent If $M\neq 0$ then from Eq. (\ref{Steady:N}) one has
\[ \frac{a_{1}}{1+d_{1}B^{\ast }}-\frac{a_{MB}B^{\ast }}{1+d_{2}M}-b_{1}=0. \]
\noindent Upon solving for $M$ and substituting in for $B^\ast$ generates the cubic equation, namely,
\begin{equation*}
aM^{3}+bM^{2}+cM+d=0
\end{equation*}%
where
\begin{eqnarray*}
	a &=&-b_{2}d_{2}^{2}\left(a_{MB}b_{1}d_{1}-a_{1}b_{2}d_{2}+b_{1}b_{2}d_{2}\right) \\
	b &=&-\left(a_{MB}^{3}d_{1}+a_{MB}^{2}b_{2}d_{2}-3a_{1}b_{2}^{2}d_{2}^{2}+3b_{1}b_{2}^{2}d_{2}^{2}+2a_{MB}b_{1}b_{2}d_{1}d_{2}\right)\\
	c &=&-b_{2}\left(a_{MB}^{2}+b_{1}d_{1}a_{MB}-3a_{1}b_{2}d_{2}+3b_{1}b_{2}d_{2}\right)\\
	d &=&\allowbreak b_{2}^{2}\left( a_{1}-b_{1}\right). \\
\end{eqnarray*}

\begin{lemma}
	Let $p(x)=a_0x^{b_0} + a_1x^{b_1}+\cdots+a_nx^{b_n}$ be a polynomial with nonzero real coefficients $a_i$, where the $b_i$ are integers satisfying $0\leq b_0< b_1<b_2<\cdots <b_n$. If $a_0a_n>0$, then $z(p)$, the number of positive zeros of $p$ counting multiplicities is even; if $a_0a_n<0$ then $z(p)$ is odd.
\end{lemma}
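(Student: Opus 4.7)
The plan is to reduce to an ordinary polynomial and then argue by a sign count at the two ends of $(0,\infty)$. Since $x^{b_0}>0$ for all $x>0$, the positive zeros of $p$, counted with multiplicities, coincide with those of
\[
  q(x) \;=\; a_0 + a_1 x^{b_1-b_0} + \cdots + a_n x^{b_n-b_0},
\]
a genuine polynomial with constant term $a_0$ and leading coefficient $a_n$. This first reduction removes the awkwardness that the $b_i$ are general nonnegative integers and lets me speak about ordinary factorizations.

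Next I would factor $q(x) = L(x)\,g(x)$, where $L(x)=\prod_i (x-r_i)^{m_i}$ collects the distinct positive real zeros $r_i$ of $q$ with their multiplicities $m_i$, so that $z(p)=\sum_i m_i$, and $g$ is a polynomial with no positive real zero. Because $L$ is monic, the leading coefficient of $g$ is $a_n$. Since $g$ is continuous and nonvanishing on $(0,\infty)$, it has constant sign there, and comparison with its leading term shows this sign equals $\operatorname{sign}(a_n)$. Moreover $g(0)\neq 0$, since otherwise $q(0)=a_0$ would vanish, so by continuity $\operatorname{sign}(g(0))=\operatorname{sign}(a_n)$ as well.

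Evaluating the factorization at $x=0$ gives
\[
  a_0 \;=\; q(0) \;=\; L(0)\,g(0) \;=\; (-1)^{\sum_i m_i}\Bigl(\prod_i r_i^{m_i}\Bigr) g(0),
\]
and since each $r_i>0$ the product $\prod_i r_i^{m_i}$ is strictly positive. Taking signs yields the identity $\operatorname{sign}(a_0) = (-1)^{z(p)}\operatorname{sign}(a_n)$, equivalently $(-1)^{z(p)} = \operatorname{sign}(a_0 a_n)$, from which both conclusions of the lemma follow immediately.

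The only point requiring any care is the justification that $\operatorname{sign}(g(0))=\operatorname{sign}(a_n)$; once the hypotheses $a_0\neq 0$ and $a_n\neq 0$ are invoked to rule out vanishing of $g(0)$ and of the leading coefficient, this is simply the standard observation that a real polynomial without roots in $(0,\infty)$ keeps its sign throughout $[0,\infty)$. Everything else is bookkeeping, so I do not anticipate a substantive obstacle.
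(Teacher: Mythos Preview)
Your proof is correct. The reduction to $q(x)=p(x)/x^{b_0}$, the factorization $q=L\cdot g$ with $L$ collecting the positive roots, and the sign comparison at $x=0$ versus $x\to\infty$ all go through exactly as you describe; the only delicate step, that $\operatorname{sign}(g(0))=\operatorname{sign}(a_n)$, is handled properly once you observe $g(0)\neq 0$.

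There is nothing to compare against in the paper: the lemma is stated there without proof, as a known fact (it is the parity consequence of Descartes' rule of signs). Your argument is a clean, self-contained justification that the paper simply omits.
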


\begin{lemma}
\label{lem:l1}
	Consider Eq. ~\eqref{ODE:M}, there are at most 2 positive interior equilibrium solutions, as long as $a_{1} < b_{1}$.
\end{lemma}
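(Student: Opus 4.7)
The plan is to reduce the claim to counting positive real roots of the explicit cubic
$p(M)=aM^{3}+bM^{2}+cM+d$
displayed just above the statement, since each positive root $M^{*}$ yields exactly one interior equilibrium through the formula $B^{*}=a_{MB}M^{*}/[b_{2}(1+d_{2}M^{*})]$ obtained from \eqref{Steady:R}, and that value of $B^{*}$ is automatically positive. The correspondence between positive roots of $p$ and interior equilibria is therefore bijective, so it suffices to prove $p$ has at most two positive roots.

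First I would read off the signs of the leading coefficient $a$ and of the constant term $d$ under the hypothesis $a_{1}<b_{1}$. For the constant term this is immediate, since $d=b_{2}^{2}(a_{1}-b_{1})<0$. For the leading coefficient, regrouping gives $a=-b_{2}d_{2}^{2}\bigl(a_{MB}b_{1}d_{1}+b_{2}d_{2}(b_{1}-a_{1})\bigr)$, a product of positive factors times $-1$, and therefore $a<0$. In particular $ad>0$ and neither coefficient vanishes.

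Next I would apply the preceding lemma (Lemma 2, the sign-alternation version of Descartes' rule) to $p$ written in ascending-power form $d+cM+bM^{2}+aM^{3}$. The first and last \emph{nonzero} coefficients are $d$ and $a$ with $ad>0$, so the lemma forces the number of positive roots of $p$, counted with multiplicity, to be \emph{even}. Since $\deg p=3$, the only admissible even counts are $0$ or $2$, each of which is bounded by $2$, giving the claim.

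The only mildly subtle point is to check that Lemma 2 still applies when $b$ or $c$ happens to vanish, but this is already built into its statement, which allows an arbitrary strictly increasing exponent list $0\le b_{0}<b_{1}<\cdots <b_{n}$; zero coefficients are simply dropped and the parity conclusion is unchanged. Beyond this bookkeeping the argument is essentially two sign computations plus one application of Lemma 2, so I do not anticipate a real analytic obstacle; the hypothesis $a_{1}<b_{1}$ is used precisely to guarantee that both $a$ and $d$ are negative at the same time, which is what makes the parity argument go through.
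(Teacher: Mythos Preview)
Your proposal is correct and follows essentially the same route as the paper: both compute the product $ad$ of the leading and constant coefficients of the cubic, observe that $a_{1}<b_{1}$ forces $ad>0$, and then invoke Lemma~2 to conclude the number of positive roots is even, hence at most two. Your write-up is in fact more careful than the paper's---you explicitly verify the bijection between positive roots and interior equilibria, separately determine the signs of $a$ and $d$, and address the degenerate case where $b$ or $c$ vanishes---whereas the paper simply asserts $ad>0$ (with a slightly garbled intermediate expression) and applies the lemma.
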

\begin{proof}
For our system, $a_0a_3 = ad =  (a_1+b_1)^2 b_2d_2 - a_{MB}b_1d_1(a_1-b_1) > 0$, if $a_{1} < b_{1}$. Hence there are an even number of positive roots which have to be 2, via Lemma \ref{lem:l1}
\end{proof}

\subsection{Linear Stability Analysis}

The Jacobian of the nonlinear system is
\begin{eqnarray}
J&=&  \left[\begin{array}{ccc}
\frac{\partial f_M}{\partial M}M + f_M & & \frac{\partial f_M}{\partial B}M \\
& & \\
\frac{\partial f_B}{\partial M}B       & & \frac{\partial f_B}{\partial B}B + f_B
\end{array}%
\right],
\end{eqnarray}
where
\begin{eqnarray*}
	\begin{array}{lcl}
		\frac{\partial f_M}{\partial M} = \frac{d_2 a_{MB} B}{(1+d_2M)^2} &&~~~ \frac{\partial f_B}{\partial M} = \frac{a_{MB}}{(1+d_2M)^2} \\
		&&\\
		\frac{\partial f_M}{\partial B} = -\frac{a_1d_1}{(1+d_1B)^2} - \frac{a_{MB}}{1+d_2M} &&~~~ \frac{\partial f_B}{\partial B} = -b_2.
	\end{array}
\end{eqnarray*}
%
%
\subsubsection{The stability of the interior equilibrium}


\begin{thm}
	\label{thm:wsoloo}
	Consider the model described by \eqref{ODE:M}. Let $(M^\ast, B^\ast)$ be an interior equilibrium point and $J\equiv J(M^\ast,B^\ast)$.  Then $(M^\ast,B^\ast)$ is locally asymptotically stable if the following conditions are satisfied:
	\begin{enumerate}
		\item $\trace(J)=\frac{d_2 a_{MB} M^\ast}{(1+d_2M^\ast)^2}  <b_2  $
		\item $\deter(J) =  \frac{1}{d_2 B^\ast} \left(\frac{a_1d_1}{(1+d_1B^\ast)^2} + \frac{a_{MB}}{1+d_2M^\ast}\right) > b_2  $
	\end{enumerate}
\end{thm}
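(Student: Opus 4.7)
\textbf{Proof plan for Theorem \ref{thm:wsoloo}.} The result is a standard two-dimensional Routh-Hurwitz criterion: for a planar autonomous system, an equilibrium is locally asymptotically stable provided $\trace(J)<0$ and $\deter(J)>0$. So my plan is to (i) compute $J$ at $(M^\ast,B^\ast)$ using the equilibrium relations, (ii) rewrite $\trace(J)<0$ as condition~1, (iii) rewrite $\deter(J)>0$ as condition~2, and (iv) invoke Routh-Hurwitz.

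First, observe that at an interior equilibrium $f_M(M^\ast,B^\ast)=f_B(M^\ast,B^\ast)=0$, so the Jacobian entries given in the statement of the theorem collapse: the diagonal terms $\tfrac{\partial f_M}{\partial M}M+f_M$ and $\tfrac{\partial f_B}{\partial B}B+f_B$ reduce to $\tfrac{\partial f_M}{\partial M}M^\ast$ and $\tfrac{\partial f_B}{\partial B}B^\ast=-b_2 B^\ast$. Substituting the partial derivatives displayed just below the Jacobian, I would get
\begin{equation*}
J(M^\ast,B^\ast)=\begin{pmatrix} \dfrac{d_2 a_{MB}B^\ast M^\ast}{(1+d_2M^\ast)^2} & \left(-\dfrac{a_1 d_1}{(1+d_1B^\ast)^2}-\dfrac{a_{MB}}{1+d_2M^\ast}\right)M^\ast \\[1.2ex] \dfrac{a_{MB}B^\ast}{(1+d_2M^\ast)^2} & -b_2 B^\ast \end{pmatrix}.
\end{equation*}

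Next I would take the trace. Since $M^\ast,B^\ast>0$, the sign of $\trace(J)=B^\ast\!\left[\tfrac{d_2 a_{MB}M^\ast}{(1+d_2M^\ast)^2}-b_2\right]$ is determined by the bracketed quantity, so $\trace(J)<0$ is equivalent to condition~1. For the determinant, I would compute
\begin{equation*}
\deter(J)=M^\ast B^\ast\!\left[-\frac{d_2 a_{MB}b_2 B^\ast}{(1+d_2M^\ast)^2}+\frac{a_{MB}}{(1+d_2M^\ast)^2}\!\left(\frac{a_1 d_1}{(1+d_1B^\ast)^2}+\frac{a_{MB}}{1+d_2M^\ast}\right)\right],
\end{equation*}
factor out the common $a_{MB}/(1+d_2M^\ast)^2$, and then divide through by $d_2 B^\ast$ to see that $\deter(J)>0$ is equivalent to condition~2 as stated.

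Finally, since $J$ is $2\times 2$, the standard characterization of stability (eigenvalues with negative real part $\iff$ negative trace and positive determinant) finishes the argument via the Hartman-Grobman theorem. There is no serious obstacle here: the only bookkeeping subtlety is remembering that the positive factors $M^\ast$ and $B^\ast$ drop out of the sign conditions, which is precisely why the theorem states conditions~1 and~2 in their simplified (factor-free) form rather than as literal expressions for $\trace(J)$ and $\deter(J)$.
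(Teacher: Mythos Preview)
Your proposal is correct and complete. The paper does not actually supply a proof of Theorem~\ref{thm:wsoloo}; it simply states the result and immediately moves on to evaluate the trace and determinant at the best-fit parameters, treating the linearization criterion as standard. Your argument---using $f_M=f_B=0$ at an interior equilibrium to simplify the diagonal entries, factoring the positive quantities $M^\ast$, $B^\ast$, and $a_{MB}/(1+d_2M^\ast)^2$ out of the sign conditions, and invoking the planar Routh--Hurwitz criterion---is exactly the intended justification, and your observation that the labels ``$\trace(J)$'' and ``$\deter(J)$'' in the theorem are shorthand for the sign-determining factors (not the literal trace and determinant) is apt.
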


What we note is the equilibriums that occur with the optimal/best-fit parameters, the trace and determinant are
$\trace(J) = -0.0056 < b_2 = 4.976\times 10^{-9}, \deter(J) = -0.1179 < b_2 = 4.976\times 10^{-9}$ both negative, and hence we have an unstable (saddle). This is visible in Fig. \ref{fig:Mono_noCompetition_bestFit_LR} where the Monolingual population continues to grow.

\begin{remark}
As seen in numerical simulations, the two interior equilibrium points $E_{1}^{\ast
}=\left( M_{1}^{\ast },B_{1}^{\ast }\right) $ and $E_{2}^{\ast }=\left(
M_{2}^{\ast },B_{2}^{\ast }\right) $ collide with each other and system \eqref{ODE:M} has the unique instantaneous interior equilibrium (saddle--node interior equilibrium) $\bar{E}=\left(\bar{M}, \bar{B}\right)$. Also one of
the eigenvalues of the Jacobian evaluated at the point $\bar{E}$ becomes
non-hyperbolic and its stability cannot be studied by the linearization
technique. Thus there is a chance of bifurcation around the instantaneous
interior equilibrium.
\end{remark}

This is demonstrated via the following theorem.
\begin{thm}\label{saddle:a_MB}
	System \eqref{ODE:M} experience a saddle--node bifurcation around $\bar{E}$ at $\hat{a}%
	_{MB},$ where $\hat{a}_{MB}=\left( 1+d_{2}\bar{M}\right) \left( b_{2}d_{2}%
	\bar{B}-\frac{a_{1}d_{1}}{\left( 1+d_{1}\bar{B}\right)^2 }\right) $ if $\bar{E}
	$ exists and $\left( \frac{d_{2}\hat{a}_{MB}}{(1+d_{2}\bar{M})^{2}}\right)
	\bar{M}<b_{2}.$
\end{thm}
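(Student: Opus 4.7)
The plan is to invoke Sotomayor's theorem for saddle--node bifurcations with bifurcation parameter $a_{MB}$. Writing the right--hand side of \eqref{ODE:M} as a vector field $F(M,B;a_{MB})$, I must verify four conditions at $(\bar E,\hat a_{MB})$: (i) $F(\bar E;\hat a_{MB})=\mathbf 0$; (ii) the Jacobian $J\equiv DF(\bar E;\hat a_{MB})$ has a simple zero eigenvalue, with right nullvector $V$ and left nullvector $W$; (iii) the transversality condition $W^{\top}F_{a_{MB}}(\bar E;\hat a_{MB})\neq 0$; and (iv) the nondegeneracy condition $W^{\top}\bigl[D^{2}F(\bar E;\hat a_{MB})(V,V)\bigr]\neq 0$.

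Condition (i) is immediate from the hypothesis that $\bar E$ is an interior equilibrium. For (ii), I first simplify $\det J$ at a general interior equilibrium using the Jacobian expressions from the preceding subsection; a direct expansion factors as
\[
\det J \;=\; \frac{a_{MB}\,M^{\ast}B^{\ast}}{(1+d_{2}M^{\ast})^{2}}\!\left[\,\frac{a_{1}d_{1}}{(1+d_{1}B^{\ast})^{2}}+\frac{a_{MB}}{1+d_{2}M^{\ast}}-b_{2}d_{2}B^{\ast}\,\right].
\]
The definition of $\hat a_{MB}$ is precisely the value that zeroes out the bracket at $\bar E$, so $\det J=0$. The hypothesis $d_{2}\hat a_{MB}\bar M/(1+d_{2}\bar M)^{2}<b_{2}$ simultaneously forces $\trace J<0$ at $(\bar E,\hat a_{MB})$; since $\det J=0$ while $\trace J\neq 0$, the eigenvalues are $0$ and $\trace J<0$, so $0$ is simple.

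For (iii), I solve $JV=\mathbf 0$ and $W^{\top}J=\mathbf 0$ explicitly, obtaining closed--form eigenvectors whose entries are rational in $\bar M,\bar B$ and the model parameters. The partial derivative $F_{a_{MB}}=\bigl(-\tfrac{MB}{1+d_{2}M},\tfrac{MB}{1+d_{2}M}\bigr)^{\top}$ is manifestly nonzero at $\bar E$, and the ratio of the components of $W$ is fixed by $J_{21}/J_{11}$, which for generic parameters differs from $-1$, so the inner product $W^{\top}F_{a_{MB}}$ does not cancel. This establishes transversality.

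Step (iv) is where I expect the main obstacle. The Hessian tensor $D^{2}F$ requires the second partials of $a_{1}M/(1+d_{1}B)$, $a_{MB}MB/(1+d_{2}M)$, and $b_{2}B^{2}$; these are rational but bulky, and the scalar $W^{\top}[D^{2}F(V,V)]$ expands into a long polynomial/rational expression in $\bar M,\bar B,\hat a_{MB}$ and the other parameters. My strategy would be to exploit the defining identity $\hat a_{MB}/(1+d_{2}\bar M)+a_{1}d_{1}/(1+d_{1}\bar B)^{2}=b_{2}d_{2}\bar B$ together with the eigenvector ratios coming from $JV=\mathbf 0$ to collapse the cross--terms, aiming to reduce the expression to a single manifestly signed quantity dominated by the $\partial^{2}/\partial B\partial M$ contribution of the interaction. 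The remark preceding the theorem, which records the numerical observation that the two interior equilibria $E_{1}^{\ast}$ and $E_{2}^{\ast}$ coalesce transversally (rather than tangentially) as $a_{MB}$ crosses $\hat a_{MB}$, is strong heuristic evidence that (iv) holds and rules out a cusp or higher codimension degeneracy, giving the saddle--node conclusion via Sotomayor.
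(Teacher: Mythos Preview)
Your proposal is correct and follows essentially the same route as the paper: both apply Sotomayor's theorem, deduce $\det J=0$ at $\hat a_{MB}$, use the trace hypothesis to ensure the zero eigenvalue is simple, compute $F_{a_{MB}}=\bigl(-\tfrac{MB}{1+d_2M},\tfrac{MB}{1+d_2M}\bigr)^{\!\top}$ for transversality, and leave the quadratic nondegeneracy $W^{\top}[D^{2}F(V,V)]\neq 0$ as an unverified assertion. Two minor remarks: your ``generically $\neq -1$'' hedge for the transversality ratio is unnecessary, since $j_{21}/j_{11}=1/(d_2\bar M)>0$ always; and your explicit factorization of $\det J$ is actually cleaner than what the paper presents, while your honest flagging of step (iv) as the obstacle matches the paper's treatment, which simply states the inequality without computation.
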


\begin{proof}
	\label{thm:Thm312p} According to Sotomayor's theorem one of the eigenvalues
	of the Jacobian $J_{\bar{E}}$ at the saddle--node equilibrium point $\bar{E}$
	will be zero iff det$J_{\bar{E}}=\left( j_{11}j_{22}-j_{12}j_{21}\right) =0,$
	which gives $a_{MB}=\hat{a}_{MB}.$ The other eigenvalue is basically $trace$
	$J_{\bar{E}}=\left( j_{11}+j_{22}\right) $ evaluated at $a_{MB}=\hat{a}_{MB}$
	must have negative real part to get saddle--node bifurcation
	\cite{Perko2013}, so we need to take $\left( \frac{d_{2}a_{MB}\bar{B}}{%
		(1+d_{2}\bar{M})^{2}}\right) \bar{M}-b_{2}\bar{B}<0\Rightarrow \left( \frac{%
		d_{2}a_{MB}}{(1+d_{2}\bar{M})^{2}}\right) \bar{M} < b_{2}.$
	
	Let $V$ and $W$ are the eigenvectors corresponding to eigenvalue $0$ of the
	matrix $J_{\bar{E}}$ and its transpose, respectively. We obtain that $%
	V=\left( v_{1},v_{2}\right) ^{T}$ and $W=\left( w_{1},w_{2}\right) ^{T},$
	where $v_{1}=-\frac{j_{12}v_{2}}{j_{11}}=-\frac{j_{22}v_{2}}{j_{21}},$ $%
	w_{1}=-\frac{j_{21}w_{2}}{j_{11}}=-\frac{j_{22}w_{2}}{j_{12}}$ and $%
	v_{2},w_{2}\in B-\left\{ 0\right\} .$ Now let $F=\left[ \frac{a_{1}M}{%
		1+d_{1}B}-\frac{a_{MB}MB}{1+d_{2}M}-b_{1}M,\frac{a_{MB}MB}{1+d_{2}M}%
	-b_{2}B^{2}\right] ^{T}$ and $U=\left( M,B\right) ^{T},$then $W^{T}\left[
	F_{a_{MB}}(U,\hat{a}_{MB})\right] =\left( w_{1},w_{2}\right) \left( -\frac{MB%
	}{1+d_{2}M},\frac{MB}{1+d_{2}M}\right) ^{T}=\frac{MB}{1+d_{2}M}w_{2}\left(
	\frac{j_{21}}{j_{11}}+1\right) \neq 0,$ and $W^{T}\left[ D^{2}F(U,\hat{a}%
	_{MB}\left( V,V\right) \right] \neq 0.$ So from Sotomayor's theorem the
	system undergoes a saddle--node bifurcation around the positive interior
	equilibrium $\bar{E}$ at $a_{MB}=\hat{a}_{MB}$
\end{proof}
Keeping all parameters fixed we can see the coexisting
equilibrium points \ $E_{1}^{\ast }=\left( M_{1}^{\ast },B_{1}^{\ast
}\right) $ and $E_{2}^{\ast }=\left( M_{2}^{\ast },B_{2}^{\ast }\right) $
collide with each other through saddle--node bifurcation $a_{MB}$ crosses
the critical magnitude $\hat{a}_{MB}=\left( 1+d_{2}\bar{M}\right) \left(
b_{2}d_{2}\bar{B}-\frac{a_{1}d_{1}}{\left( 1+d_{1}\bar{B}\right)^2 }\right) ,$
and then mutually annihilated. The parametric surface
\newline
$\Gamma =\left\{ \left( a_{1},a_{MB},b_{1},b_{2},d_{1},d_{2}\right) \in
R_{+}^{6}:E_{1}^{\ast }=E_{2}^{\ast }=\bar{E}\text{ }real\text{
	positve }root\right\} $
	is known as the saddle-node bifurcation surface.

\begin{thm}\label{saddle:b_2}
	The system \eqref{ODE:M} undergoes a saddle--node bifurcation around $\bar{E}$ at $\hat{b}%
	_{2},$ where $\hat{b}_2= \frac{1}{d_2 \bar{B}} \left(\frac{a_1d_1}{(1+d_1 \bar{B})^2} + \frac{a_{MB}}{1+d_2\bar{M}}\right)$ if $\bar{E}
	$ exists and $\left( \frac{d_{2} a_{MB}}{(1+d_{2}\bar{M})^{2}}\right)
	\bar{M}<\hat{b}_{2}.$
\end{thm}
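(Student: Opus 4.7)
The plan is to apply Sotomayor's theorem to the system \eqref{ODE:M} with $b_2$ now playing the role of the bifurcation parameter, following the template established in the proof of Theorem~\ref{saddle:a_MB}. First I would locate the critical value $\hat{b}_2$ as the value at which one eigenvalue of $J_{\bar{E}}$ vanishes. Using the determinant expression from Theorem~\ref{thm:wsoloo}, setting $\det J_{\bar{E}} = j_{11}j_{22} - j_{12}j_{21} = 0$ and solving for $b_2$ yields
\[
\hat{b}_2 = \frac{1}{d_2\bar{B}}\left(\frac{a_1 d_1}{(1+d_1\bar{B})^2} + \frac{a_{MB}}{1+d_2\bar{M}}\right),
\]
so that $0$ is a simple eigenvalue of $J_{\bar{E}}$ at $b_2 = \hat{b}_2$. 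Next I would confirm that the remaining eigenvalue is real and nonzero: since $\trace(J_{\bar{E}}) = \bar{B}\bigl[\tfrac{d_2 a_{MB}\bar{M}}{(1+d_2\bar{M})^2} - \hat{b}_2\bigr]$, the hypothesis $\left(\tfrac{d_2 a_{MB}}{(1+d_2\bar{M})^2}\right)\bar{M} < \hat{b}_2$ gives $\trace(J_{\bar{E}}) < 0$, and together with zero determinant this forces the second eigenvalue to be strictly negative.

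With the spectral setup in place, I would compute the eigenvectors $V=(v_1,v_2)^T$ of $J_{\bar{E}}$ and $W=(w_1,w_2)^T$ of $J_{\bar{E}}^T$ associated with the zero eigenvalue in the same form as the previous theorem, namely $v_1 = -j_{12}v_2/j_{11}$ and $w_1 = -j_{21}w_2/j_{11}$ with $v_2,w_2 \neq 0$. Then two Sotomayor transversality conditions remain. Differentiating $F = \bigl(\tfrac{a_1 M}{1+d_1 B} - \tfrac{a_{MB}MB}{1+d_2 M} - b_1 M,\ \tfrac{a_{MB}MB}{1+d_2 M} - b_2 B^2\bigr)^T$ with respect to $b_2$ gives $F_{b_2} = (0,-B^2)^T$, so the first transversality check
\[
W^T F_{b_2}(\bar{E},\hat{b}_2) \;=\; -w_2\bar{B}^2 \;\neq\; 0
\]
holds immediately from $w_2 \neq 0$ and $\bar{B} > 0$. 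For the second, since $b_2$ enters $F$ only through a term linear in $B$ in the second component, the Hessian $D^2F(\bar{E},\hat{b}_2)$ is exactly the one already assembled in the proof of Theorem~\ref{saddle:a_MB}, and the contraction $W^T\bigl[D^2 F(\bar{E},\hat{b}_2)(V,V)\bigr]$ is the same expression shown to be nonzero there at the instantaneous equilibrium.

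The main obstacle is the verification of the second transversality condition: one must assemble the second partials of $F$, form the symmetric bilinear action on $(V,V)$, and then pair with $W$ without any cancellation at $(\bar{M},\bar{B})$. Because the dependence of $F$ on $b_2$ is linear and absent from the nonlinear mixed terms, this verification imports directly from the parallel calculation for $\hat{a}_{MB}$; only the book-keeping of signs and of the explicit forms of $v_1, w_1$ in terms of $v_2, w_2$ needs to be re-checked, and I expect the conclusion to go through unchanged under the positivity of $\bar{M},\bar{B}$ and of all model parameters. Hence Sotomayor's theorem delivers a saddle--node bifurcation around $\bar{E}$ at $b_2 = \hat{b}_2$.
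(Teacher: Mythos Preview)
Your proposal is correct and mirrors the paper's approach exactly: the paper's proof simply states that it is ``similar to the proof in Theorem~\ref{saddle:a_MB}'', i.e.\ the same Sotomayor-theorem verification with $b_2$ as the bifurcation parameter, which is precisely what you outline. One small slip: $b_2$ enters $F$ through the term $-b_2 B^2$, which is quadratic (not linear) in $B$, so $\partial_{BB}F_2=-2b_2$ does depend on $b_2$ and the Hessian is not literally identical to the one at $\hat a_{MB}$; this is harmless, however, since the nondegeneracy check $W^T[D^2F(\bar E,\hat b_2)(V,V)]\neq 0$ goes through by the same calculation.
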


\begin{proof}
The proof is similar to the proof in Theorem \ref{saddle:a_MB}
\end{proof}

From the best fit parameters, $\bar{E}=(0.000152389, 0.0000848151)$ and from Theorem \ref{saddle:b_2}\\
\begin{align*}
\hat{b}_2 &= \frac{1}{d_2 \bar{B}} \left(\frac{a_1d_1}{(1+d_1 \bar{B})^2} + \frac{a_{MB}}{1+d_2\bar{M}}\right) \\
& = \dfrac{1}{6565.040\times 0.0000848151}\left(\dfrac{1.295\times 0.171}{(1+0.171\times 0.0000848151)^2}+\dfrac{1013.749}{(1+6565.040\times 0.000152389} \right)  \\
& \approx 910.51
\end{align*}

and

$$ \left( \frac{d_{2} a_{MB}}{(1+d_{2}\bar{M})^{2}}\right) \bar{M}=\left(\dfrac{6565.040\times 1013.749}{(1+6565.040\times 0.000152389)^2} \right) 0.000152389 \approx 253.44<\hat{b}_{2}. $$


\subsection{Numerical Simulations}

Note that the system \eqref{ODE:M} experience a saddle-node bifurcation around
\newline
$\bar{E}=(0.000152389 ,0.0000848151)$  at $\hat{b}_2=910.510098046$, see Figure \ref{fig:null_b2}. We note here that the system \eqref{ODE:M} does not experience saddle-node bifurcation with the optimal parameters at $\hat{a}_{MB}$ since the two interior equilibrium points close to the point of collision are both unstable. This is also true for $d_{1}, d_{2}$.

\begin{figure}[H]
	\centering
\subfigure{\includegraphics[width=.3\linewidth]{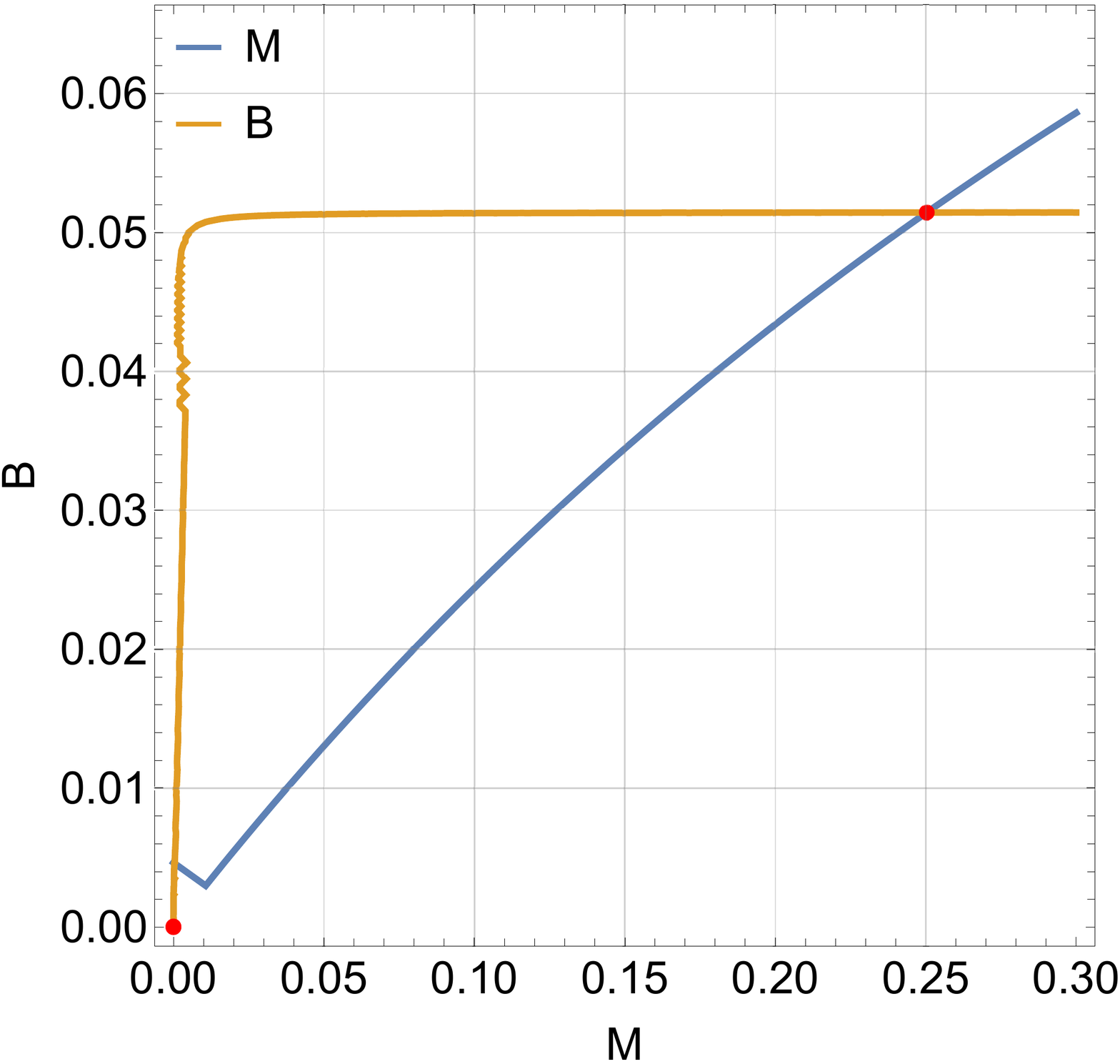}}%
\subfigure{	\includegraphics[width=.3\linewidth]{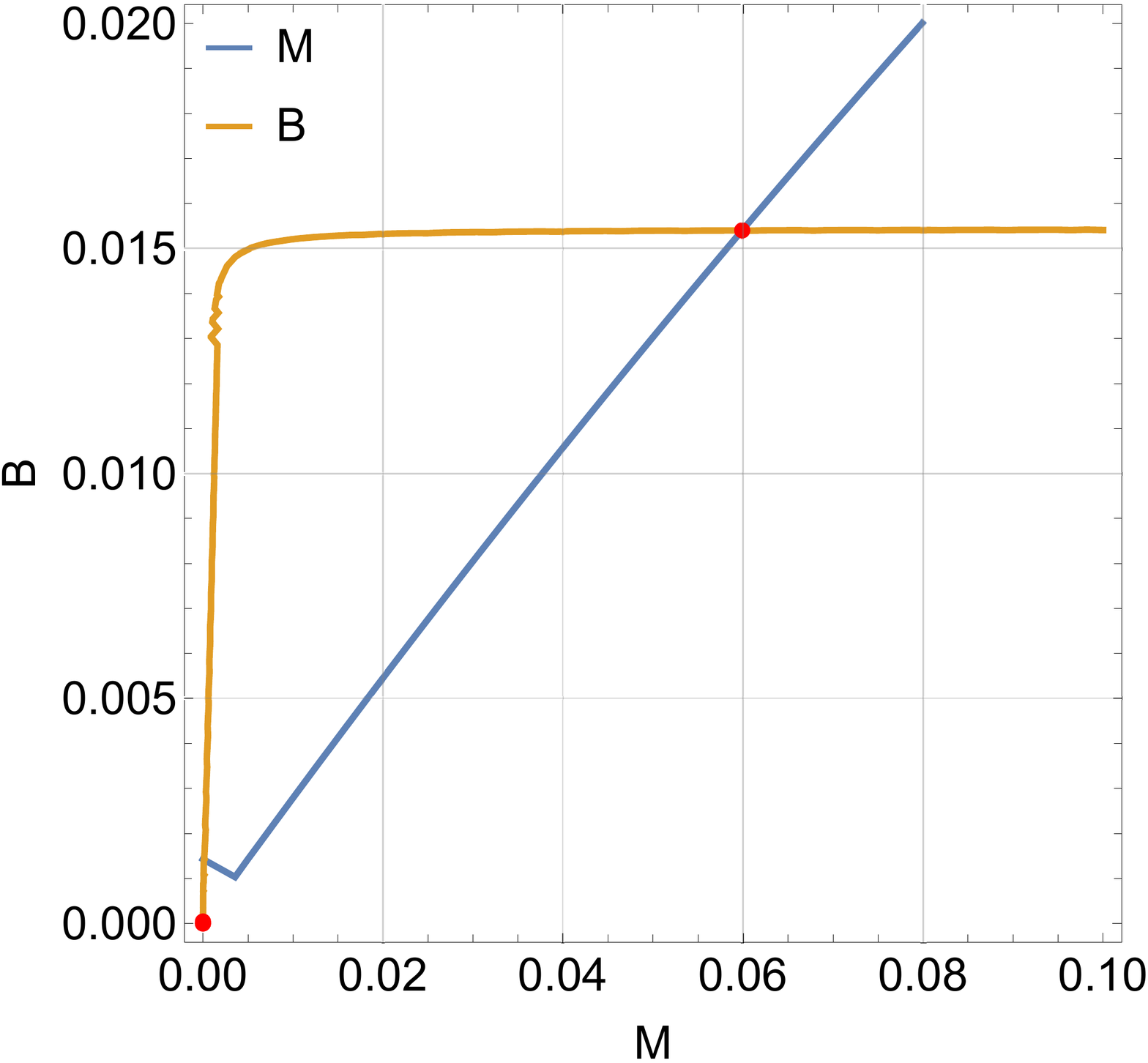}}
\subfigure{\includegraphics[width=.3\linewidth]{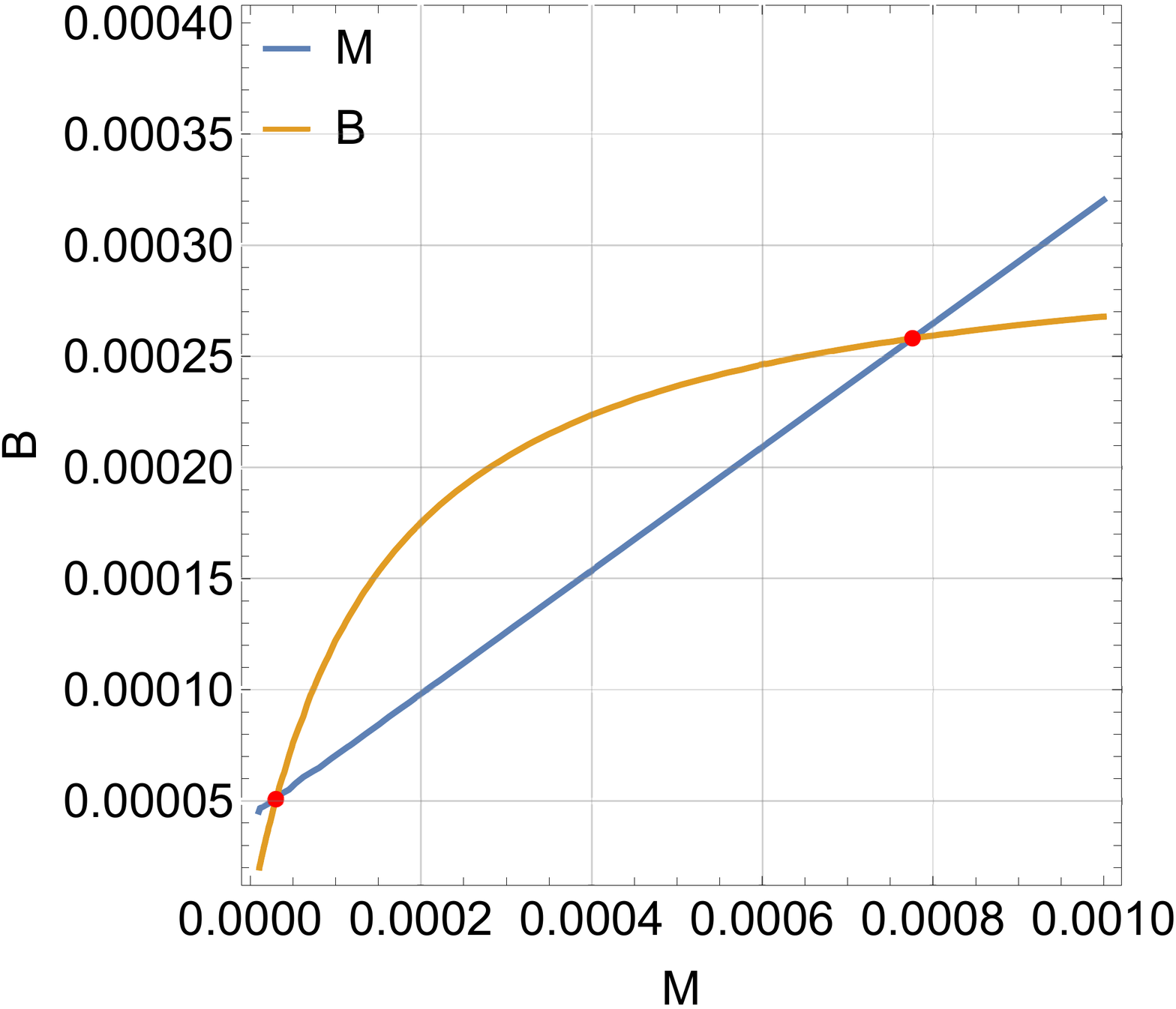}}%
\subfigure{\includegraphics[width=.3\linewidth]{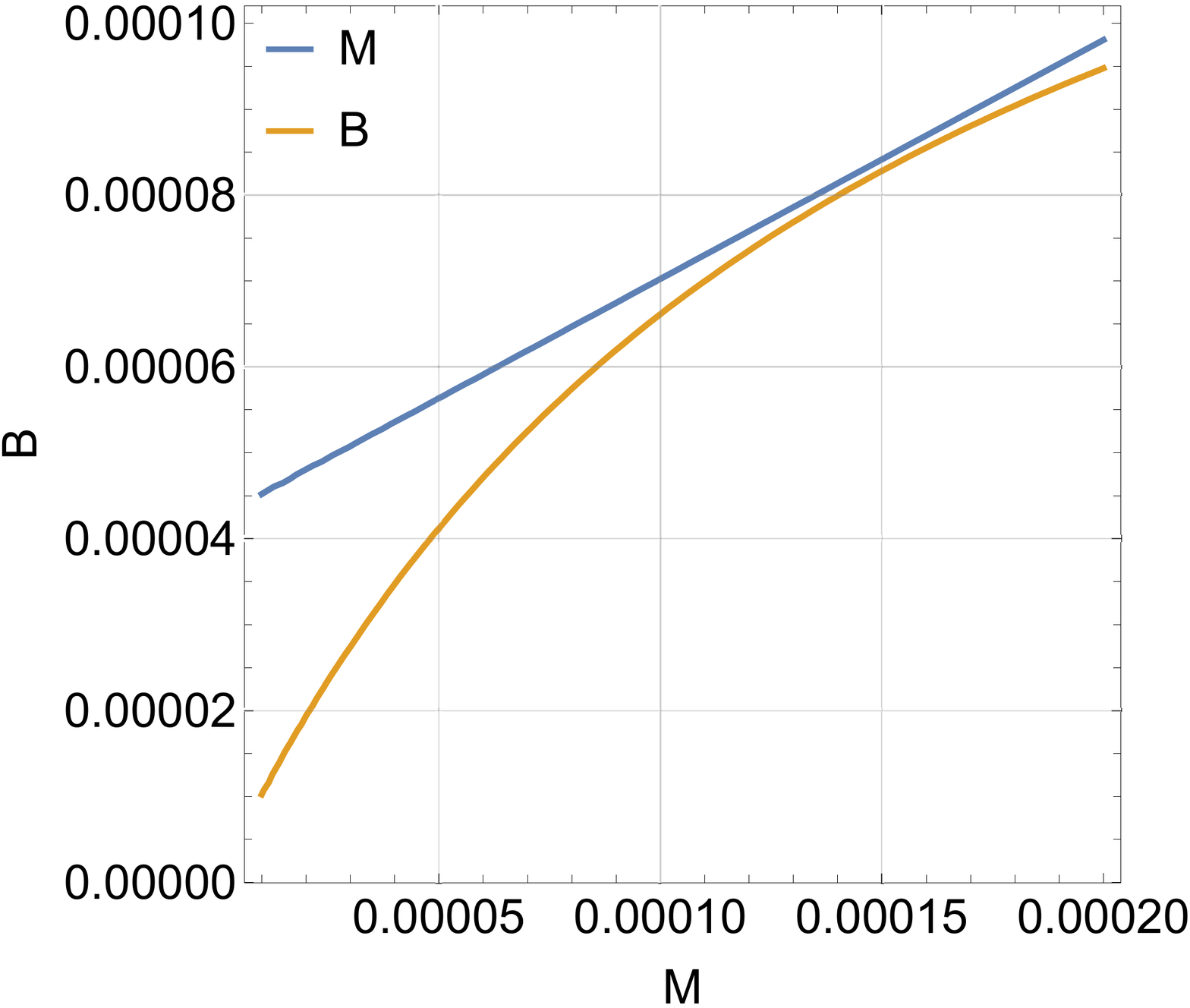}}
	\caption{The graphical illustration of nullclines and existence of interior equilibria showing the route to a saddle node bifurcation. (A) $b_2=3$, (B) $b_2=10$, (C) $b_2=500$, (D) $b_2=925$. }
	\label{fig:null_b2}
\end{figure}

\section{Discussion and Conclusion}

Language dynamics models have been useful in predicting population based language shift towards English, with great success. These models have accurately captured census data of many countries, where English is considered a high status language - and as per classical theories of language competition, the population has shifted towards speaking English (the higher status language), becoming Bilinguals, and then subsequent generations have seen to be speaking Monolingual English, such as in Wales or Scotland. A similar trend is observed in India, where English is clearly a higher status language than Hindi, due to the merits it possesses in terms of economic empowerment. In fact the 1961-1991 data shows the rise of English at a faster rate than Hindi, and classical language dynamics models that adhere to this principle, do in fact predict the best fit to this data, see Table \ref{table: smt}. However, if one looks at the data from 1991-2011, a drastic change is seen. English speakers have saturated while monolingual Hindi continues to grow exponentially.

In order to accurately predict the overall trend in the Indian census data form 1961-2011, we propose a new model that takes into account (1) local ecological factors that are promoting Hindi, such as the \emph{hindutva} ideology of the ruling BJP party, as well as (2) Competition as a saturating factor on the Bilingual population, due to lack of enough resources for a full fledged English education. Our model captures the Indian census data with near perfect accuracy (SSE 1.77 $\times$ $10^{-4}$), out performing all of the other well known models from the language dynamics literature, see Table \ref{all comparison models}. Interestingly, the best fit parameters according to our model, predict an equilibrium which is unstable - hence continued growth of Monolingual Hindi speakers. There is a second stable equilibrium in the phase space, but it is at extremely low density and unrealistic for the numbers of the Indian population. From a phase analysis
point of view we vary the parameters, $d_{1},d_{2},a_{MB},b_{2}$ in order to see how the equilibrium populations change.
A saddle node bifurcation occurs only in the case of varying $b_{2}$, see Figure \ref{fig:null_b2}, however at collision and the creation of a stable equilibrium, is seen to occur at very low density. It would be of interest to consider possible means of getting the Monolingual population to saturate, without manipulating the birth rate - this would of course be an obvious choice. It might also make for further interesting analysis if one increases the resources for Bilinguals, or looks at other means of control such as in Theoretical Ecology \cite{ParBioSci, PWB19}.

All in all we provide a first example of a high status language that has been out competed by a low status language, due to the presence of ecological factors that could curb the increase of the high status language, whilst promoting the low status language. This provides a gateway to investigating language shift with a similar social-political setting, in other centers around the world.

\pagebreak
\appendix
\section{Language Dynamics Models}
\subsection{Parshad \& Chand Model (2016)} \label{appA}
The model in \cite{PC16},
\begin{align}
\dfrac{dM}{dt} = M\left[ \dfrac{-\epsilon B}{B+M} + 1 -\dfrac{M}{K} \right]\notag\\
\dfrac{dB}{dt} = B\left[ \dfrac{\epsilon M}{B+M} - \dfrac{d_2 B}{B+M}  \right]
\end{align}
where $M(t)$ and $B(t)$ are state variables representing the populations of Monolinguals (strictly Hindi) and Bilinguals (Hindi bilingual with English). Further $\epsilon$ and $d_2$ are model parameters and $K$ is the total population.

\subsection{Isern \& Fort Model (2014)} \label{app-Isern}
The Isern's language model in \cite{IF14} describes the rate of change in the population densities of two linguistic groups M and B,

\begin{eqnarray}\label{Isern ODE}
\dfrac{\partial n_M}{\partial t}=a n_M\left(1- \dfrac{n_M+n_B}{K}\right)+\dfrac{\gamma}{\left(n_M + n_B\right)^{\alpha+\beta-1}n_M^{\alpha}n_B^{\beta}} \notag\\
\dfrac{\partial n_B}{\partial t}=a n_B\left(1- \dfrac{n_M+n_B}{K}\right)-\dfrac{\gamma}{\left(n_M + n_B\right)^{\alpha+\beta-1}n_M^{\alpha}n_B^{\beta}}
\end{eqnarray}
where $n_M$ and $n_B$ are population densities of languages groups M and B respectively and K is the carrying capacity. Further $\gamma$ is a time-scaling parameter and $\alpha,\beta\geq1$ are two parameters related to the attraction of both languages M and B.

\subsection{Mira $\&$ Paredes Model (2005)}\label{app=mira2005}
The three language competition model in \cite{M05} is given by,
\begin{align}
\dfrac{dM}{dt} &= yP_{YM} + BP_{YM}-M\left(P_{MY}+P_{MB}\right)\\
\dfrac{dB}{dt} &= MP_{MB} + BP_{YB}-B\left(P_{BM}+P_{BY}\right).\notag
\end{align}
with $M,y$ and $B$ are the fractions of populations.
The new set of equations when there is no Y monolingual group ($y=0$) after substituting to transition probabilities yields the following.
\begin{align}
\dfrac{dM}{dt} &= c\left[BS_M(1-B)^a-M(1-S_M)(1-M)^a\right]\\
\dfrac{dB}{dt} &= c\left[M(1-S_M)(1-M)^a-BS_M(1-B)^a\right] \notag
\end{align}
where the transition probabilities are $P_{BM} = cS_M(1-B)^a$ and $P_{MB} = c(1-S_M)(1-M)^a$. Moreover, we know $M+B=1$.
\subsubsection{Abrams $\&$ Strogatz model (2003)} \label{app: AK10}
Therefore we have,
\begin{align}\label{mera1-strogatz}
\dfrac{dM}{dt} &= c\left[S_M M^aB-(1-S_M)MB^a\right]\\
\dfrac{dB}{dt} &= -c\left[S_M M^aB - (1-S_M)MB^a\right] \notag
\end{align}
The system in Eq.~\eqref{mera1-strogatz} is similar to the model of Abrams-Strogatz \cite{Stro03} with $M$ and $B$ representing the \textit{monolingual} groups of languages $M$ and $B$.
If we assume there is no transition from Bilingual group to Monolingual group, then we have $P_{BM}=0$. Then the equations reduced to,
\begin{align} \label{mira2-isern}
\dfrac{dM}{dt} &= -c(1-S_M)MB^a\\
\dfrac{dB}{dt} &= c(1-S_M)MB^a \notag
\end{align}
The system in Eqs ~\eqref{mira2-isern} is similar to Isern \& Fort model in \cite{IF14}.

\subsection{Kandler's Language Shift Model (2010)} \label{app: AK10}
The dynamics of language shift model in \cite{AK10} without reaction-diffusion term is as below.
\begin{align}
\dfrac{du_1}{dt} &= a_1u_1 \left(1-\dfrac{u_1}{K-(u_2-u_3)}\right) - c_{31}u_3u_1 + c_{12}u_2u_1\\
\dfrac{du_2}{dt} &= a_2u_2 \left(1-\dfrac{u_2}{K-(u_1-u_3)}\right) + \left(c_{13}+c_{31}\right)u_1u_3 - \left(c_{12}u_1 + c_{32}u_3\right)u_2\notag\\
\dfrac{du_3}{dt} &= a_3u_3\left(1-\dfrac{u_3}{K-(u_1-u_2)}\right) - c_{13}u_1u_3 + c_{32}u_2u_3 \notag
\end{align}
This model is modified  to suit to our data by removing one of the monolingual variable, i.e. setting $u_3=0$.
\begin{align}
\dfrac{du_1}{dt} &= a_1u_1 \left(1-\dfrac{u_1}{K-(u_2)}\right)  -c_{12}u_2u_1\\
\dfrac{du_2}{dt} &= a_2u_2 \left(1-\dfrac{u_2}{K-(u_1)}\right) + c_{12}u_2u_1\notag
\end{align}
Here $u1,u2$ represent the monolingual and bilingual group respectively.
\end{document}